\documentclass[11pt,letterpaper,twocolumn]{article}

\usepackage[margin=0.85in]{geometry}
\usepackage{cmap}
\usepackage{times}
\usepackage{amsmath}
\usepackage{amssymb}
\usepackage{amsthm}
\usepackage{mathtools}
\usepackage{enumitem}
\usepackage{booktabs}
\usepackage{array}
\usepackage{multirow}
\usepackage[expansion=false]{microtype}
\usepackage{xcolor}
\PassOptionsToPackage{hyphens}{url}
\usepackage[colorlinks=true,linkcolor=blue,citecolor=blue,urlcolor=blue]{hyperref}
\usepackage{fancyhdr}
\usepackage{titlesec}
\usepackage{natbib}
\usepackage{tikz}
\usetikzlibrary{positioning, arrows.meta, calc}
\usepackage{algorithm}
\usepackage{algpseudocode}

\titleformat*{\section}{\large\bfseries}
\titleformat*{\subsection}{\normalsize\bfseries\itshape}
\titleformat*{\subsubsection}{\small\bfseries}
\titleformat{\paragraph}[runin]{\bfseries}{}{0pt}{}

\theoremstyle{plain}
\newtheorem{theorem}{Theorem}
\newtheorem{lemma}{Lemma}
\theoremstyle{definition}
\newtheorem{definition}{Definition}
\theoremstyle{remark}
\newtheorem*{interpretation}{Interpretation}
\newtheorem*{proofsketch}{Proof sketch}
\newtheorem*{remarkbox}{Remark}

\newcommand{\PoEop}{\mathrm{PoE}}
\newcommand{\negl}{\mathsf{negl}}
\newcommand{\Sig}{\mathsf{Ver}}
\newcommand{\Kgw}{K_{\mathrm{gw}}}
\newcommand{\Krec}{K_{\mathrm{rec}}}
\newcommand{\Fresh}{\mathsf{Fresh}}
\newcommand{\CR}{\mathcal{CR}}
\newcommand{\RL}{\mathcal{RL}}
\newcommand{\KR}{\mathcal{KR}}
\newcommand{\WF}{\mathsf{WF}}
\newcommand{\Adv}{\mathsf{Adv}}
\DeclareMathOperator{\Replay}{Replay}
\DeclareMathOperator{\Sorttopo}{Sort_{topo}}
\DeclareMathOperator{\Lin}{Lin}

\DeclareMathOperator{\Commit}{Commit}
\DeclareMathOperator{\Observe}{Observe}
\DeclareMathOperator{\Seal}{Seal}

\DeclareMathOperator{\Norm}{Norm}
\DeclareMathOperator{\Serialize}{Serialize}
\DeclareMathOperator{\Sign}{Sign}

\DeclareMathOperator{\RevokeEAC}{RevokeEAC}
\DeclareMathOperator{\EACStatus}{EACStatus}

\setlist{itemsep=2pt,topsep=3pt,parsep=0pt,leftmargin=*}

\title{\vspace{-2em}\textbf{Proof of Execution:}\\[4pt]
{\Large Runtime Verification for Governed AI Agent Actions}}

\author{James Rhodes \qquad George Kang \\[2pt]
\normalsize AlphaBitCore, Inc. \\
\normalsize \texttt{\{james,george\}@alphabitcore.com}}

\date{April 2026}

\begin{document}

\twocolumn[
\begin{@twocolumnfalse}
\maketitle
\thispagestyle{fancy}
\begin{abstract}
\noindent
Agent systems increasingly execute rather than advise. When an AI
agent queries regulated data, invokes effectful tools, and mutates
persistent state, correctness is no longer captured by whether a
terminal output looks plausible. The operative questions are
whether each step was authorized under a contract, whether the
recorded history is tamper-evident, and whether the trajectory can
be reconstructed deterministically. We formalize this as
\emph{runtime proof of execution}. An execution is a triple
$x = (C, T, R)$: a contract $C$, an Execution Causal Event Stream
(ECES) $T$, and a replay context $R$. A well-formedness predicate
and five validator-checkable invariants form the PoE validity
predicate
\[
\PoEop(C, T, R) = 1 \iff
\WF(C,T,R) \wedge I_1 \wedge I_2 \wedge I_3 \wedge I_4 \wedge I_{5a}.
\]
Five semantic guarantees $G_1,\ldots,G_5$ describe what we want of
the deployed system: authorization, path compliance, null effect on
deny, history integrity, and replayability. We state soundness
under explicit assumptions --- EUF-CMA signature unforgeability,
hash collision resistance, single-logical-Gateway integrity,
Recorder integrity, trace completeness, dependency-declaration
completeness, and recorder-clock monotonicity --- and prove that
any probabilistic polynomial-time adversary that produces a
PoE-valid execution violating a semantic guarantee yields either a
signature forgery, a hash collision, or a quantified
deployment-failure event. The \emph{Prime Execution Model} (PEM)
separates planning, enforcement, effect, and recordkeeping into
distinct authority planes; a lemma reduces trace completeness to
Effector-exclusive credentialing. An \emph{Execution Attestation
Certificate} (EAC) is issued only when $\PoEop = 1$. In a
preliminary single-node TypeScript prototype, PoE adds
$\approx 2.7$\,ms on a minimal flow and 4.4\% overhead on
concurrent batch workloads. A standard eight-event trace
compresses to $\approx 1.1$\,KB, and injected Gateway-bypass and
trace-mutation attacks are rejected by invariant checks in our
test set. PoE
does not replace consensus, TEEs, or zkVMs; it binds authorization,
effect, history, and replay into a single runtime-checkable object
so that governed execution becomes attestable under contract.

\medskip
\noindent\textbf{Keywords:} runtime verification $\cdot$ causal
provenance $\cdot$ tamper-evident logging $\cdot$ deterministic
replay $\cdot$ authorization policy $\cdot$ execution attestation
$\cdot$ runtime verification for agentic systems.
\end{abstract}
\vspace{1em}
\end{@twocolumnfalse}
]

\section{Introduction}

Agent systems increasingly execute rather than advise. An AI agent
can compose a plan, call external tools, apply policy filters, and
produce persistent effects without a human intermediary at each
step. When execution is compliance-sensitive, the binding question
is not whether a terminal output looks correct; it is whether each
step was authorized, whether the system stayed within enforced
policy, whether durable mutations correspond to permitted actions,
and whether the trajectory can be reconstructed. These are
questions about the execution \emph{trajectory}, not its terminal
artifact.

Mainstream architectures answer them only partially. APIs and
microservices execute without proving authorization bounds.
Distributed tracing systems such as
OpenTelemetry~\citep{opentelemetry} observe execution for
debugging, not for authorization or replay.
Consensus~\cite{castro1999pbft} agrees on state transitions, not
on the off-chain trajectories that produced them.
TEEs~\cite{costan2016sgx} attest to code integrity inside an
enclave; an attested enclave can still take an unauthorized
sequence of actions. Verifiable
computation~\cite{gennaro2010vc} and zkVMs prove that a specified
function was computed correctly but do not bind authorization,
enforcement path, durable effect, or replay context into the
same object. Supply-chain attestation
frameworks~\cite{torres2019intoto, slsa} attest to how a build
artifact was produced, not how a live agent execution unfolded.

The gap is live in regulated settings. FINRA's 2026 Annual Regulatory
Oversight Report, released December~9,~2025, adds a new section
on generative AI that, for the first time, articulates supervision,
recordkeeping, testing, and ongoing-monitoring considerations for
agentic systems capable of autonomously executing multi-step
tasks~\cite{finra2026}.
Contemporary legal analysis reads this development as a
redirection of regulatory focus from content to
conduct~\cite{snell2025}. Logging alone is insufficient when the
system under supervision is itself adaptive.

\subsection{Motivating domain}
\label{sec:motivation}

In regulated industries, supervisory obligations increasingly
apply to the \emph{conduct} of an AI system, not only its final
output~\cite{finra2026, snell2025}. A firm must be able to answer
whether the system acted under an approved authority, whether it
stayed within the approved path, whether denied actions were
side-effect free, whether records are complete, and whether the
full chain can be reconstructed after the fact. These obligations
map directly onto PoE's guarantee structure
(Table~\ref{tab:regmap}). PoE does not replace supervision; it
provides a verifiable runtime substrate for it.

\paragraph{Why LLM planners specifically.}
An LLM planner is architecturally unlike traditional control
software. Its behavior can be corrupted by inputs it reads --- a
retrieved document, a tool output, a conversational
turn~\cite{greshake2023injection} --- without any observable
change to the planner itself. The compromise is undetectable at
the planner boundary, and the standard software-engineering
defenses (unit tests, code review, fuzzing) do not transfer to
prompt-space adversaries. Containment must therefore happen
\emph{below} the planner: at the interface between proposed
actions and durable effects. PoE places that interface on an
explicit, signed, replayable record. The architectural weight
follows from the trust model of the component being bounded, not
from the generic desirability of observability.

\begin{table}[!t]
\centering\small
\renewcommand{\arraystretch}{1.15}
\begin{tabular}{@{}p{1.05in} p{0.65in} p{1.35in}@{}}
\toprule
\textbf{Regulatory risk} & \textbf{PoE map} & \textbf{Technical link} \\
\midrule
Supervisory substitution (Rules 3110 / 3120) &
$G_1, G_2$; $I_1, I_2$ &
Each step carries contract- and Gateway-level authorization. \\
Books-and-records integrity (Rule 4511 / SEC 17a-4) &
$G_4$; $I_4$ &
ECES preserves the full causal chain, not only the terminal output. \\
Audit and replay &
$G_5$; $I_{5a}$, Thm.~\ref{thm:replay} &
Audit shifts from log interpretation to deterministic replay. \\
Scope and authority &
$G_1, G_2$; $I_1, I_2$ &
Scope escape cannot become a valid PoE. \\
\bottomrule
\end{tabular}
\caption{PoE guarantees as a runtime substrate for supervisory
obligations articulated in FINRA's 2026 examination
guidance~\cite{finra2026, snell2025}.}
\label{tab:regmap}
\end{table}

\subsection{Contributions}

This paper develops \emph{Proof of Execution} (PoE), a runtime
verification architecture for governed AI-agent actions. Our
contributions are:

\begin{itemize}
\item A \textbf{formal model} of governed execution as a
proof-carrying object $x = (C, T, R)$, with an event schema, a
replay context, and a well-formedness predicate (\S\ref{sec:model}).
\item A \textbf{revised invariant set}
$\{I_1, I_2, I_3, I_4, I_{5a}\}$ capturing contract freshness,
Gateway scope, deny-side null effect, sealed commit order, and
envelope closure (\S\ref{sec:invariants}).
\item \textbf{Three formal results}: invariant minimality by
witness construction (Thm.~\ref{thm:independence}); soundness
under an explicit PoE-forgery game with advantage bounded by
EUF-CMA signature security, collision resistance, and quantified
deployment-failure terms $\epsilon_{\text{tc}}, \epsilon_{\text{dep}},
\epsilon_{\text{clock}}$ (Thm.~\ref{thm:safety}); and
deterministic replay under interpreter determinism and dependency
declaration (Thm.~\ref{thm:replay}).
\item A \textbf{validator} (Alg.~\ref{alg:validate}) making the
invariant set operational on a causal event DAG.
\item An \textbf{architecture}, the \emph{Prime Execution Model}
(PEM, \S\ref{sec:pem}), separating planning, enforcement, effect,
and recordkeeping to prevent privilege collapse.
\item An \textbf{attestation protocol} issuing an Execution
Attestation Certificate (EAC) only when $\PoEop = 1$
(\S\ref{sec:attestation}).
\end{itemize}

PoE's contribution is not to prove arbitrary computation (as a
zkVM does) or to agree on state (as consensus does). It is to
bind authorization, enforcement path, durable effect,
tamper-evident history, and replay context into a single
runtime-checkable object. The individual invariants have
antecedents in the runtime-verification, reference-monitor, and
tamper-evident-logging literatures; the paper's contribution is
fourfold: (i)~the \emph{composition} of these properties into one
proof-carrying object; (ii)~an explicit separation of
cryptographic assumptions from deployment assumptions, with the
latter entering the soundness bound as named parameters
$\epsilon_{\text{tc}}, \epsilon_{\text{dep}},
\epsilon_{\text{clock}}$ rather than as hand-waving; (iii)~a
reduction (Lemma~\ref{lem:tc}) showing that the strongest
deployment assumption (trace completeness) follows from a
concrete architectural discipline, Effector-exclusive
credentialing; and (iv)~a mapping from this formal structure to
current supervisory expectations for agentic AI. Work, in this
paradigm, is not computation consumed or state agreed upon; it is
\emph{attestable execution under contract} (Fig.~\ref{fig:loop}).

\begin{figure*}[!t]
\centering
\begin{tikzpicture}[
    >=Latex,
    box/.style={
        rectangle, draw, thick, rounded corners=2pt,
        minimum width=2.1cm, minimum height=1.0cm,
        align=center, font=\small
    },
    flow/.style={->, thick},
    lbl/.style={font=\footnotesize, inner sep=2pt}
]
\node[box] (P) {Planner};
\node[box, right=1.2cm of P] (G) {Gateway\\ + PDP};
\node[box, right=1.2cm of G] (E) {Effector};
\node[box, right=1.2cm of E] (R) {Trace\\Recorder};
\node[box, below=1.4cm of R] (V) {PoE\\Validator};
\node[box, left=1.2cm of V] (A) {Attestation\\{\footnotesize (EAC)}};
\node[box, left=1.2cm of A] (S) {Scheduler};
\node[font=\footnotesize\itshape, above=0.1cm of P] {Planning};
\node[font=\footnotesize\itshape, above=0.1cm of G] {Enforcement};
\node[font=\footnotesize\itshape, above=0.1cm of E] {Effect};
\node[font=\footnotesize\itshape, above=0.1cm of R] {Record};
\draw[flow] (P) -- node[above, lbl] {propose} (G);
\draw[flow] (G) -- node[above, lbl] {allow} (E);
\draw[flow] (E) -- node[above, lbl] {seal} (R);
\draw[flow] (R) -- node[right, lbl] {trace $T$} (V);
\draw[flow] (V) -- node[above, lbl] {$\PoEop{=}1$} (A);
\draw[flow] (A) -- (S);
\draw[flow] (S) -- node[left, lbl] {next action} (P);
\end{tikzpicture}
\caption{The PoE closed loop. The top row is a single governed
execution: Planner proposes under contract $C$; Gateway (PEP/PDP)
emits the canonical allow/deny; Effector performs durable
mutation within the authorized scope; Recorder seals the causal
stream $T$. The bottom row is the governance loop: the validator
checks $\PoEop(C,T,R)$; on success an EAC is issued; the
scheduler uses it as authorization for the next action.}
\label{fig:loop}
\end{figure*}

\section{System Model}
\label{sec:model}

\subsection{Execution as a proof-carrying object}

\begin{definition}[Execution]
An execution is a triple $x = (C, T, R)$ where $C$ is a
\emph{contract}, $T$ is an \emph{Execution Causal Event Stream}
(ECES), and $R$ is a \emph{replay context vector}.
\end{definition}

The contract $C$ fixes the execution boundary: principal,
authorized capability set, applicable policy snapshot, validity
window $[t_{\mathrm{nb}}, t_{\mathrm{na}}]$, revocation reference,
and replay configuration. A distinguished identifier
$\mathrm{id}(C) = H(C)$ uniquely names $C$ under a
collision-resistant hash $H$. No effectful execution is valid
without a contract.

\paragraph{Contract registry and revocation.}
A verifier must dereference $\mathrm{id}(C)$ to the full contract
content. The Record Plane maintains a content-addressed
\emph{Contract Registry} $\CR : H(C) \mapsto C$, itself
append-only and tamper-evident. Contract revocation is represented
by a revocation log $\RL$. Define
$\Fresh(C, t, \RL) = 1$ iff no valid revoke event for $C$ appears
in $\RL$ before recorder time $t$. The predicate is
\emph{point-in-time}: revocation at $t{=}5$ does not
retrospectively invalidate an execution that was valid at $t{=}3$.
Whether a downstream workflow accepts an already-issued EAC is a
separate lifecycle question (\S\ref{sec:eac-lifecycle}); we write
\[
\EACStatus(\mathrm{eac\_id}, t) \in
\{\mathsf{active}, \mathsf{revoked}, \mathsf{suspended}\}
\]
and distinguish historical PoE validity from current EAC
acceptability.

The replay context $R$ pins the deterministic environment: planner
version, model or persona hash, policy snapshot hash, memory
snapshot hash, capability binding, tool schemas, dependency
manifest, captured random seeds, and captured external inputs per
event.

\subsection{The proof atom}

\begin{definition}[Event]
\label{def:event}
An event $e \in T$ is the smallest proof-bearing unit of
execution. It is a tuple of fields
\begin{align*}
e = (&\mathrm{id},\ \mathrm{parent},\ \mathrm{contract\_hash},\ \mathrm{principal},\\
& \mathrm{capability},\ \mathrm{authorized\_scope},\ \mathrm{gateway\_ref},\\
& \mathrm{decision},\ \mathrm{input\_hash},\ \mathrm{tool\_schema\_hash},\\
& \mathrm{effect\_type},\ \mathrm{resource\_id},\ \mathrm{delta\_hash},\\
& \mathrm{envelope\_hash},\ \mathrm{prev\_event\_hash},\\
& t_{\mathrm{rec}},\ \mathrm{commit\_seq},\ \sigma),
\end{align*}
with $\mathrm{decision} \in
\{\mathsf{allow}, \mathsf{deny}, \bot\}$ (the latter for
non-Gateway events),
$\mathrm{effect\_type} \in
\{\mathsf{none}, \mathsf{mutation}, \mathsf{external}\}$, and
$\sigma$ a signature over the canonical serialization. An event
is \emph{effectful} iff $\mathrm{effect\_type} \neq \mathsf{none}$.
\end{definition}

Three field groups deserve explicit semantics.
$\mathrm{authorized\_scope}$ is nonempty only on Gateway allow
events; it is a capability set or predicate, and containment is
$g.\mathrm{authorized\_scope}(e.\mathrm{capability}) = 1$.
$\mathrm{gateway\_ref}$ points from each effectful event to its
authoritative Gateway decision. The determinism envelope
$\mathrm{envelope\_hash}$ is computed over canonical serialization
of the replay-critical fields (\S\ref{sec:invariants}).
Wall-clock timestamps are advisory; the authoritative time
witness is $t_{\mathrm{rec}}$, assigned by the Recorder, on which
contract validity checks rely.

\begin{definition}[ECES]
An Execution Causal Event Stream is a tamper-evident causal event
structure $T = (V, E, \prec, \ell)$: $V$ events, $E$ parent
edges, $\prec$ the strict transitive closure of $E$, $\ell$
labeling. ECES is append-only, authority-aware, replay-aware, and
cryptographically sealed. The linear sequence of recorder commits
is a \emph{canonicalization} of the DAG, not the whole semantic
structure.
\end{definition}

Events with no parent path in either direction are
\emph{concurrent} in the causal
sense~\cite{mazurkiewicz1987trace}. To make replay deterministic,
the canonical linearization
$\Lin(T) = \Sorttopo(T, \text{key} = (\mathrm{commit\_seq},
\mathrm{id}))$ respects $\prec$ and breaks ties by recorder
sequence and event id. For concurrent mutations to a shared
resource we require \emph{resource serialization}: for any two
effectful events with the same $\mathrm{resource\_id}$, either
$e_i \prec e_j$ or $e_j \prec e_i$.

\paragraph{Well-formedness.}
\begin{multline*}
\WF(C, T, R) = \WF_{\text{type}} \wedge \WF_{\text{dag}}
\wedge \WF_{\text{contract}}\\
\wedge \WF_{\text{resource}} \wedge \WF_{\text{time}}.
\end{multline*}
The subpredicates assert that fields are well-typed, the parent
graph is acyclic and consistent with $\prec$, every
$\mathrm{contract\_hash}$ dereferences through $\CR$, effectful
events on shared resources are deterministically serializable, and
$t_{\mathrm{rec}}$ and $\mathrm{commit\_seq}$ are monotone.

PoE sits adjacent to tamper-evident audit
logs~\cite{crosby2009secure, laurie2014ct} and provenance
standards~\cite{w3cprov} but differs in scope: those systems
protect records \emph{of} execution; PoE also binds authorization,
effect, and replay into the same object, so the record
reconstructs the execution.

\section{A Worked Example}
\label{sec:example}

We illustrate the formalism with an agent task: \emph{``Retrieve
the two-week price performance for a specified public
equity.''}\footnote{Numeric values in this example are
illustrative; no real market data is asserted.}

On task creation the planner compiles the request into two agent
nodes, $\mathsf{extract\_ticker} \to
\mathsf{retrieve\_performance}$, under a single contract $C$. The
execution-permissions block narrows a requested capability set
from 4 patterns to 13 resolved entries, of which 2 are exercised.
This $4 \to 13 \to 2$ narrowing illustrates $I_2$: the planner
requests a capability family, the governance plane resolves it
under the principal's policy, and only the narrowly authorized
skills are invoked. Four events anchor the trace:
$\mathsf{TASK\_CREATED}$, $\mathsf{EB\_COMPLETE}$ (seals the
compiled contract), $\mathsf{CAPABILITY\_RESULT}$ (records the
price-fetch call), $\mathsf{TASK\_COMPLETED}$ (records the
terminal response). Each event is signed; causal parentage is
explicit.

\paragraph{Scope containment.}
$\mathsf{retrieve\_performance}$ has capability
$e.\mathrm{capability} = \mathsf{web.fetch.market\_price}$. The
Gateway allow event $g$ carries
$g.\mathrm{authorized\_scope} \supseteq
\{\mathsf{web.fetch.market\_price}\}$, so
$g.\mathrm{authorized\_scope}(e.\mathrm{capability}) = 1$.
$\mathrm{Check}_{I_2}$ verifies $g.\mathrm{id} =
e.\mathrm{gateway\_ref}$, checks $\Sig(g, \Kgw) = 1$, and
evaluates scope containment.

\paragraph{Deny branch.}
Consider a hypothetical extension in which the agent attempts
$e'.\mathrm{capability} = \mathsf{brokerage.place\_order}$.
Gateway emits $g'.\mathrm{decision} = \mathsf{deny}$ with
$g'.\mathrm{authorized\_scope} = \emptyset$. The deny event is
sealed, but every descendant must satisfy
$e.\mathrm{effect\_type} = \mathsf{none}$ and
$e.\mathrm{delta\_hash} = \bot$. A trading-state write after the
deny causes $\mathrm{Check}_{I_3}$ to fail; a write that \emph{is
not} recorded is not an exception to $I_3$ but a
trace-completeness failure.

\paragraph{Adversarial mutation.}
Suppose an adversary replaces the HTTP response body in
$\mathsf{CAPABILITY\_RESULT}$ with a different price sequence.
Because $\mathrm{input\_hash}$, $\mathrm{envelope\_hash}$, and
the successor's $\mathrm{prev\_event\_hash}$ bind the original
body, the substitution fails either
$H(\Serialize(\Norm(\mathrm{input}'))) \neq e.\mathrm{input\_hash}$
or $e_{i+1}.\mathrm{prev\_event\_hash} \neq H(e'_i)$.
$\mathrm{Check}_{I_4}$ or $\mathrm{Check}_{I_{5a}}$ rejects.

\paragraph{What replay requires.}
To reconstruct the execution a verifier needs $(C, T, R)$: the
contract resolved through $\CR$, the sealed trace $T$, and the
replay context $R$ containing planner version, tool schemas, the
dependency manifest, and the captured HTTP response bound to the
relevant event. Without the captured response the trace is not
replayable. Replay failure from a missing but declared
dependency is an $I_{5a}$ failure; failure from an undeclared
dependency enters through $\epsilon_{\text{dep}}$ (A6).

\section{Threat Model and Assumptions}
\label{sec:threats}

\subsection{Adversary}

Let $\mathcal{A}$ be a probabilistic polynomial-time adversary
that may submit crafted requests, compromise the planner, observe
and replay network traffic, attempt direct tool invocation outside
the intended path, exploit environmental non-determinism, and
attempt to fabricate or mutate traces. The adversarial goal is to
output $(C^*, T^*, R^*, \omega)$ such that the validator accepts
while $\omega$ witnesses a violation of some semantic guarantee.

\subsection{Assumptions A1--A7}
\label{sec:assumptions}

Let $\lambda$ be the security parameter for signature and hash
primitives. The deployment-failure terms below are not necessarily
functions of $\lambda$; they enter Thm.~\ref{thm:safety} as
independent parameters.

\begin{description}[leftmargin=1.2em,style=nextline]
\item[\textbf{A1 Signature unforgeability.}]
Gateway and Recorder signature schemes are EUF-CMA.
\item[\textbf{A2 Collision resistance.}]
The hash family used for hash-linking and Merkle sealing is
collision-resistant.
\item[\textbf{A3 Single-logical-Gateway integrity.}]
Thm.~\ref{thm:safety} is stated for a single \emph{logical}
Gateway: one canonical decision record per effectful branch. The
Gateway signing root is uncompromised. Replicated or threshold
Gateways require additional quorum assumptions and are future work.
\item[\textbf{A4 Recorder integrity.}]
The Recorder signing key and sealing root remain inside the
trusted computing base. Recorder-assigned $\mathrm{commit\_seq}$
and $t_{\mathrm{rec}}$ are monotone.
\item[\textbf{A5 Trace completeness.}]
For every persistent mutation or external effect $m$ inside the
declared PoE effect boundary there exists $e \in T$ with
$\Observe(e, m) \wedge \Seal(e) < \Commit(m)$. Failure probability
is $\epsilon_{\text{tc}}$.
\item[\textbf{A6 Dependency-declaration completeness.}]
Each transition $\delta_e$ declares its complete dependency set.
An undeclared, uncaptured input occurs with probability
$\epsilon_{\text{dep}}$.
\item[\textbf{A7 Recorder-clock monotonicity.}]
$t_{\mathrm{rec}}$ is monotone within a trace; contract validity
checks rely on $t_{\mathrm{rec}}$, not wall-clock. Failure
probability is $\epsilon_{\text{clock}}$.
\end{description}

A5--A7 are \emph{deployment} assumptions, not cryptographic ones.
They are supported (not implied) by the Prime Execution Model
(\S\ref{sec:pem}), specifically by making the Effector the sole
holder of mutation credentials. We do not claim PoE soundness
exceeds the strength of these deployment controls. The symbols
$\epsilon_{\text{tc}}, \epsilon_{\text{dep}},
\epsilon_{\text{clock}}$ denote deployment-reliability
parameters rather than cryptographic probabilities; they are
bounded by operational controls (credential management, runtime
instrumentation, clock discipline), not by the security parameter
$\lambda$.

\paragraph{Captured inputs in practice.}
For an LLM agent, the captured-input set per effectful event
typically contains tool response bodies, retrieved document
snippets, and any model-facing context that influenced the
transition. Section~\ref{sec:eval} reports that a standard
eight-event execution compresses to $\approx 1.1$\,KB, of which
captured inputs are the dominant component. Replay cost is not
storage-dominated: the binding constraint is that tool
invocations during replay read from captured artifacts rather
than from live external sources that may have drifted. Inputs
that cannot be captured --- e.g.\ time-sensitive external state
consumed outside the declared envelope --- are category~(iii)
dependencies (\S\ref{sec:replay}) and contribute to
$\epsilon_{\text{dep}}$ rather than to $I_{5a}$.

\subsection{Threat classes and scope}

\begin{itemize}
\item[\textbf{T1}] Unauthorized execution.
\item[\textbf{T2}] Gateway bypass.
\item[\textbf{T3}] Deny-with-effect.
\item[\textbf{T4}] Trace mutation or fabrication.
\item[\textbf{T5}] Replay evasion.
\item[\textbf{T6}] Credential escape / trace-completeness failure.
\end{itemize}

Total compromise of the Recorder sealing root, collapse of a
future threshold-Gateway quorum, broken primitives, and physical
side channels outside the declared boundary lie outside the
guarantees of this analysis. Any in-bound persistent mutation
that does not enter $T$ is \emph{not} an out-of-bound exception;
it contributes to $\epsilon_{\text{tc}}$. Prompt injection on the
planner~\cite{greshake2023injection} is not prevented by PoE; PoE
confines what a compromised planner can do within an issued
contract.

\section{Invariants, Guarantees, and the Validator}
\label{sec:invariants}

Each invariant is a syntactic predicate over $(C, T, R)$
decidable in polynomial time in $|T|$.

\paragraph{I1: Zero-trust start with contract freshness.}
There exists a unique root allow event $e_a \in T$ for the
execution under $C$ with $e_a.\mathrm{decision} = \mathsf{allow}$,
$e_a.\mathrm{contract\_hash} = H(C)$,
$\Sig(e_a, \Kgw) = 1$,
$t_{\mathrm{nb}} \leq e_a.t_{\mathrm{rec}} \leq t_{\mathrm{na}}$,
and $\Fresh(C, e_a.t_{\mathrm{rec}}, \RL) = 1$. Every effectful
event $e$ satisfies $e_a \preceq e$. $I_1$ governs
\emph{contract-level} authorization of the execution root.

\paragraph{I2: Gateway evaluation with scope and freshness.}
For every effectful event $e$ there exists a unique Gateway
decision event $g \in T$ with $g.\mathrm{id} =
e.\mathrm{gateway\_ref}$, $g.\mathrm{decision} = \mathsf{allow}$,
$g.\mathrm{contract\_hash} = H(C)$, $g \prec e$,
$g.\mathrm{authorized\_scope}(e.\mathrm{capability}) = 1$,
$\Sig(g, \Kgw) = 1$,
$t_{\mathrm{nb}} \leq e.t_{\mathrm{rec}} \leq t_{\mathrm{na}}$,
and $\Fresh(C, e.t_{\mathrm{rec}}, \RL) = 1$. $I_2$ governs
\emph{action-level} authorization and scope inheritance. Both
invariants are needed: without $I_2$, a well-rooted execution may
escape scope; without $I_1$, locally well-scoped actions lack an
authoritative root.

\paragraph{I3: Null effect on deny.}
For every $d \in T$ with $d.\mathrm{decision} = \mathsf{deny}$,
every causal descendant $e$ in the same branch satisfies
$e.\mathrm{effect\_type} = \mathsf{none}$ and
$e.\mathrm{delta\_hash} = \bot$. A persistent effect on the deny
branch that is \emph{not} recorded is a trace-completeness failure,
not an exception to $I_3$.

\paragraph{I4: Immutability and sealed commit order.}
For each event $e_i$ in recorder commit order,
$e_i.\mathrm{prev\_event\_hash} = H(e_{i-1})$,
$\Sig(e_i, \Krec) = 1$. The sealing commitment verifies;
$\mathrm{commit\_seq}$ is strictly monotone and consistent with
$\Lin(T)$.

\paragraph{I5a: Syntactic envelope closure.}
For every event $e$, $e.\mathrm{envelope\_hash}$ equals the hash
over the canonical concatenation of the replay-critical fields in
$R$ for $e$: planner version, policy snapshot hash, tool schema
(indexed by $e.\mathrm{capability}$), seeds, and captured inputs.
$I_{5a}$ checks only that the declared envelope is consistent
with $R$; discovery of undeclared dependencies is not within its
scope and is covered by A6.

\begin{definition}[PoE validity]
\label{def:poe}
$\PoEop(C, T, R) = 1 \iff \WF(C,T,R) \wedge I_1 \wedge I_2
\wedge I_3 \wedge I_4 \wedge I_{5a}$.
\end{definition}

\paragraph{Semantic guarantees.}
The guarantees $G_1,\ldots,G_5$ are properties of the
\emph{deployed system} and the execution that actually occurred.
\emph{(G1) Authorization:} no effectful action occurred without a
valid allow under $C$. \emph{(G2) Path compliance:} every
effectful action stayed within the scope authorized by a
canonical Gateway evaluation. \emph{(G3) Null effect on deny:}
denied branches produced no persistent state. \emph{(G4) History
integrity:} recorded $T$ is an unaltered record of what occurred.
\emph{(G5) Replayability:} replay under $(C, T, R)$ on a
compliant interpreter yields the same terminal result and
persistent state delta.
Let $\mathcal{G} = G_1 \wedge \cdots \wedge G_5$. The distinction
between $I_k$ (what a validator checks) and $G_k$ (what we want)
separates soundness from definition.

\paragraph{Validator.}
Algorithm~\ref{alg:validate} is a direct transcription. Its cost
is
$\mathcal{O}(|T|\log|T|) +
\mathcal{O}(|T| \cdot \text{Cost}_{\text{sig}})
+ \mathcal{O}(|T| \cdot \text{Cost}_{\text{hash}})$,
where the log factor covers linearization and resource
serialization.

\begin{algorithm}[t]
\caption{$\mathsf{ValidatePoE}(C, T, R)$}
\label{alg:validate}
\begin{algorithmic}[1]
\Require contract $C$; event DAG $T$; replay context $R$;
registry $\CR$; revocation log $\RL$
\Ensure \textsf{valid} / (\textsf{invalid}, $F$)
\State $F \gets \emptyset$
\If{$\neg\,\mathrm{CheckWF}(C, T, R, \CR, \RL)$}
  \State $F \gets F \cup \{\mathsf{WF}\}$
\EndIf
\ForAll{$J \in \{I_1, I_2, I_3, I_4, I_{5a}\}$}
  \If{$\neg\,\mathrm{Check}_J(C, T, R, \CR, \RL)$}
    \State $F \gets F \cup \{J\}$
  \EndIf
\EndFor
\If{$F = \emptyset$} \State \Return \textsf{valid} \EndIf
\State \Return (\textsf{invalid}, $F$)
\end{algorithmic}
\end{algorithm}

\section{Formal Results}
\label{sec:results}

\subsection{Minimality of the invariant set}

\begin{theorem}[Independence under A1--A7]
\label{thm:independence}
For each $k \in \{1, 2, 3, 4\}$, there exists a witness system
satisfying $\WF(C, T, R)$, assumptions A1--A7, and every
invariant in $\{I_1, \ldots, I_{5a}\} \setminus \{I_k\}$, but
violating $G_k$. For $k{=}5$, there exists a witness satisfying
$\WF$ and every invariant (including $I_{5a}$) but violating
$G_5$ through a failure of A6. Consequently, under A1--A7,
removing any invariant from the set breaks the soundness bound
of Theorem~\ref{thm:safety}.
\end{theorem}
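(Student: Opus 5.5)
The proof is by explicit witness construction, one small system per $k$. Each witness starts from a common \emph{base execution} $x_0 = (C, T_0, R_0)$, a two-event trace like the worked example of \S\ref{sec:example}. It has a root Gateway allow $e_a$ carrying $\mathrm{contract\_hash} = H(C)$ and signed under $\Kgw$ with $t_{\mathrm{nb}} \le e_a.t_{\mathrm{rec}} \le t_{\mathrm{na}}$ and an empty $\RL$. It also has one effectful child $e$ with $e.\mathrm{gateway\_ref} = e_a.\mathrm{id}$, capability inside $e_a.\mathrm{authorized\_scope}$, hash-linked and Recorder-signed in commit order, and an envelope hash computed from $R_0$.

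First I would verify once that $x_0$ satisfies $\WF$ and all five invariants, and that the deployed system behind it meets A1--A7 and $\mathcal{G}$. Each witness is then a minimal perturbation of $x_0$. I must check that the perturbation breaks the target guarantee $G_k$, leaves the other invariants intact, and can be realized by a deployment in which A1--A7 still hold. That last requirement means I never forge a signature, never produce a collision, and never let an effect escape $T$.

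The four cases $k \le 4$ are as follows.

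For $k=1$, I let the Gateway sign the root allow at a time $t_{\mathrm{rec}} > t_{\mathrm{na}}$, or after a revoke entry in $\RL$. To keep $I_2$ true, the per-action check must still pass. I make it pass by giving the effectful event its own Gateway allow $g$ with $g \prec e$, in scope and properly signed, and dated at a recorder time inside the window. The root, however, is timed so that the contract-level start fails freshness. The execution then lacks a valid authoritative root under $C$, which violates $G_1$. I expect this case to need care, because $I_2$ also checks freshness at $e.t_{\mathrm{rec}}$. A cleaner alternative is to omit the unique root allow entirely, or to duplicate it so uniqueness fails, and let $I_2$ be satisfied by $g$ alone.

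For $k=2$, I keep the valid root but set $e.\mathrm{capability}$ outside $e_a.\mathrm{authorized\_scope}$. A genuine Gateway signed the scope, so A3 holds, and the Effector really performs the out-of-scope action, so $G_2$ fails.

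For $k=3$, I add a signed deny event $d$ and a recorded effectful descendant of $d$ whose own $\mathrm{gateway\_ref}$ points to a separate in-scope allow. The effect is recorded, so A5 holds, and $I_1$ and $I_2$ are satisfied, but the denied branch produced state, so $G_3$ fails.

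For $k=4$, I use a trusted Recorder whose stored trace is later altered, for example by reordering two concurrent events or by substituting a captured input. The alteration is re-signed as in honest operation but with broken $\mathrm{prev\_event\_hash}$ links. Everything else is recomputed consistently, including the envelope hashes against the altered $R$. The result violates $G_4$ and only $I_4$.

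For $k=5$, I take $x_0$ unchanged but let the transition for $e$ read an undeclared input, such as a live clock or an unpinned HTTP response, that $R_0$ does not capture. $I_{5a}$ only compares the declared envelope against $R$, so it passes. Replay reads a drifted value and yields a different delta, which violates $G_5$. This is exactly the $\epsilon_{\text{dep}}$ event of A6.

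The main obstacle is reconciling ``violates $G_k$'' with ``satisfies A1--A7'' for $k=4$. A5 and A4 describe an honest, complete Recorder, so the tampering has to be framed as a post-hoc edit by a party holding $\Krec$ in a model where signing is available, or else the $G_4$ witness must be stated relative to the validator's view. I would handle this by letting the witness system consist of just the data $(C,T,R)$ together with an explicitly described actual execution, and by checking A1--A7 as properties of that described execution. The final sentence of the statement then follows directly. If any $I_k$ is removed, the corresponding witness is PoE-valid under the weakened predicate yet violates $\mathcal{G}$ with probability $1$. That probability is not bounded by the advantage terms of Theorem~\ref{thm:safety}, so the soundness bound fails.
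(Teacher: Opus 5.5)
Your strategy is the paper's own (Appendix~\ref{app:witnesses}): one explicit witness per invariant, with $k{=}5$ realized by an undeclared input that $I_{5a}$ cannot see. Your $k{=}3,4,5$ witnesses are its $S_3,S_4,S_5$. That includes the same resolution of the A4 tension for $k{=}4$, namely re-signing with the legitimate $\Krec$. Your $k{=}2$ witness keeps the canonical signed allow and puts $e.\mathrm{capability}$ outside its scope, instead of pointing $\mathrm{gateway\_ref}$ at a non-canonical decision. That is a slightly cleaner variant, since it needs no second decision record.

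Your doubt about the first $k{=}1$ variant is justified. With $e_a \prec e$, $I_4$ makes commit order follow $\Lin(T)$, and $t_{\mathrm{rec}}$ is monotone, so $e.t_{\mathrm{rec}} \ge e_a.t_{\mathrm{rec}} > t_{\mathrm{na}}$. Likewise, a revocation before $e_a.t_{\mathrm{rec}}$ is also before $e.t_{\mathrm{rec}}$. Since $I_2$ checks the window and $\Fresh$ at $e.t_{\mathrm{rec}}$, not at $g.t_{\mathrm{rec}}$, $I_2$ fails too. Your fallback of omitting the root is exactly $S_1$; a root dated before $t_{\mathrm{nb}}$ would also work. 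As in the paper, this violates $G_1$ only under the root-level reading of $G_1$, because $g$ is itself a valid allow under $C$.

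The genuine gap is your final step, for $I_{5a}$. You argue that once $I_k$ is dropped, ``the corresponding witness'' is accepted with probability $1$, outside the bound of Theorem~\ref{thm:safety}. But $S_5$ already satisfies $I_{5a}$ and violates A6, so its $G_5$ failure is precisely $F_{\text{dep}}$. It is absorbed by $\epsilon_{\text{dep}}$ whether or not $I_{5a}$ is in the predicate, so it shows that A6 is needed, not that $I_{5a}$ is.

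To get the $I_{5a}$ part of ``consequently'' you need one more witness. In it, A6 holds (every dependency is declared), but $R$ disagrees with some $e.\mathrm{envelope\_hash}$. For example, take a declared captured input that is missing from or altered in $R$; this is the ``missing but declared dependency'' case of \S\ref{sec:example}. Leaving $T$ untouched keeps $\WF$ and $I_1$--$I_4$ true, since they do not depend on the contents of $R$, while $I_{5a}$ fails. Replay under $(C,T,R)$ then diverges, so $G_5$ fails with no cryptographic break and no deployment-assumption failure. The paper's sketch omits this witness too, so the hole is in the theorem's corollary as stated, but your argument inherits it.
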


\begin{proofsketch}
Appendix~\ref{app:witnesses} gives five witnesses. $S_1$ omits
the root allow (violates $G_1$); $S_2$ bypasses Gateway (violates
$G_2$); $S_3$ records a deny-with-effect chain (violates $G_3$);
$S_4$ rewrites trace payloads (violates $G_4$); $S_5$ reads an
undeclared dependency and diverges on replay (violates $G_5$ via
A6 failure, with $I_{5a}$ passing syntactically). The asymmetry
of $S_5$ reflects a design choice: $I_{5a}$ is a
validator-checkable syntactic proxy for $G_5$ that depends on
dependency-declaration completeness; discovery of undeclared
dependencies from the trace alone is not decidable and therefore
enters Thm.~\ref{thm:safety} through $\epsilon_{\text{dep}}$
rather than through an invariant.
\qedhere
\end{proofsketch}

\subsection{Soundness}
\label{sec:soundness}

\begin{remarkbox}[Logical vs.\ physical Gateway uniqueness]
$I_2$'s "unique Gateway decision" is \emph{logical}: replicated
or quorum deployments~\cite{castro1999pbft} may produce one
canonical decision record through an underlying protocol.
Thm.~\ref{thm:safety} is stated for single-logical-Gateway
deployments under assumption A3; threshold Gateways require
additional quorum assumptions and are future work.
\end{remarkbox}

\begin{definition}[PoE-forgery game]
\label{def:game}
$\mathrm{Game}^{\mathcal{A}}_{\text{poe}}(\lambda)$: the
challenger generates Gateway and Recorder key pairs and
initializes $\CR$, $\RL$, and a hash/Merkle scheme $H_\lambda$;
$\mathcal{A}$ queries oracles $O_{\mathrm{gw}}, O_{\mathrm{rec}},
O_{\mathrm{reg}}, O_{\mathrm{rev}}$ and outputs
$(C^*, T^*, R^*, \omega)$; $\mathcal{A}$ wins iff
$\mathsf{ValidatePoE}(C^*, T^*, R^*) = \textsf{valid}$ and
$\neg G_k(C^*, T^*, R^*, \omega)$ for some $k$. Define
$\Adv^{\text{poe}}_{\mathcal{A}}(\lambda) =
\Pr[\text{win}]$.
\end{definition}

Define basic failure events $F_{\text{gw\_sig}}$
(Gateway signature forgery), $F_{\text{rec\_sig}}$ (Recorder
forgery), $F_{\text{hash}}$ (hash or Merkle collision),
$F_{\text{tc}}$ (trace-completeness failure),
$F_{\text{dep}}$ (dependency-declaration failure),
$F_{\text{clock}}$ (clock-monotonicity failure). Three reductions
extract cryptographic failures:

\paragraph{$B_{\mathrm{gw}}$ (EUF-CMA reduction).}
Embed the EUF-CMA public key as $pk_{\mathrm{gw}}$; route
$O_{\mathrm{gw}}$ queries through the signing oracle over
$\Norm(\mathrm{decision}, \mathrm{scope}, H(C), \mathrm{req})$. If
$\mathcal{A}$ wins, search $T^*$ for a Gateway decision whose
canonical normalization was never signed; output it as the forgery.

\paragraph{$B_{\mathrm{rec}}$.}
Symmetric; covers fabricated events, modified payloads, and
reordered seal roots.

\paragraph{$B_H$.}
Honestly simulate all oracles; on any winning transcript inspect
the hash bindings ($H(C)$, $H(e)$, $\mathrm{input\_hash}$,
$\mathrm{delta\_hash}$, $\mathrm{envelope\_hash}$, Merkle roots)
for a collision pair.

\begin{theorem}[Soundness]
\label{thm:safety}
Let $\mathcal{A}$ be any PPT adversary. Under A1--A2 and the
deployment-failure bounds
$\epsilon_{\text{tc}}, \epsilon_{\text{dep}},
\epsilon_{\text{clock}}$ from \S\ref{sec:assumptions}, there
exist reductions $B_{\mathrm{gw}}, B_{\mathrm{rec}}, B_H$ such
that
\begin{multline*}
\Adv^{\text{poe}}_{\mathcal{A}}(\lambda) \leq
q_{\mathrm{gw}} \cdot \Adv^{\text{EUF-CMA}}_{\mathrm{Sig}_{\mathrm{gw}}}(B_{\mathrm{gw}})\\
+\, q_{\mathrm{rec}} \cdot \Adv^{\text{EUF-CMA}}_{\mathrm{Sig}_{\mathrm{rec}}}(B_{\mathrm{rec}})\\
+\, q_H^2 \cdot \Adv^{\text{CR}}_H(B_H)\\
+\, \epsilon_{\text{tc}} + \epsilon_{\text{dep}} + \epsilon_{\text{clock}},
\end{multline*}
where $q_{\mathrm{gw}}, q_{\mathrm{rec}}$ are polynomial bounds on
accepted Gateway / Recorder decisions and $q_H$ is a polynomial
bound on hash bindings (the $q_H^2$ factor is the standard
birthday pair enumeration).
If the cryptographic terms are negligible in $\lambda$,
$\Adv^{\text{poe}}_{\mathcal{A}}(\lambda) \leq \negl(\lambda) +
\epsilon_{\text{tc}} + \epsilon_{\text{dep}} +
\epsilon_{\text{clock}}$.
\end{theorem}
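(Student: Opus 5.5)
The proof is a union bound over a case analysis of the winning event. Let $W$ be the event that $\mathsf{ValidatePoE}(C^*,T^*,R^*)=\textsf{valid}$ and $\neg G_k$ holds for some $k$. We partition $W$ according to the basic failure events $F_{\text{gw\_sig}}, F_{\text{rec\_sig}}, F_{\text{hash}}, F_{\text{tc}}, F_{\text{dep}}, F_{\text{clock}}$. The central claim is
\[
W \subseteq F_{\text{gw\_sig}} \cup F_{\text{rec\_sig}} \cup F_{\text{hash}} \cup F_{\text{tc}} \cup F_{\text{dep}} \cup F_{\text{clock}}.
\]
Equivalently, if none of the failure events occurs, a PoE-valid output satisfies $\mathcal{G}$. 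We prove this guarantee by guarantee, in contrapositive form. Afterward each failure probability is bounded by the corresponding reduction or deployment parameter.

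\textbf{Guarantees $G_1$--$G_3$.} Assume no failure event occurs and $\PoEop=1$.
\begin{itemize}
\item \emph{$G_1$.} Take an effectful action $m$ that actually occurred inside the effect boundary. By $\neg F_{\text{tc}}$ (A5) there is an event $e\in T^*$ observing $m$, sealed before $\Commit(m)$. By $I_1$ there is a root allow $e_a \preceq e$ carrying a valid Gateway signature.
\begin{itemize}
\item By $\neg F_{\text{gw\_sig}}$, the normalization $\Norm(\mathsf{allow},\mathrm{scope},H(C^*),\mathrm{req})$ was actually output by $O_{\mathrm{gw}}$, so a real Gateway allow for $C^*$ exists.
\item By $\neg F_{\text{hash}}$, $H(C^*)$ names a unique registry entry, so the allow is for $C^*$ and not a colliding contract.
\item By $\neg F_{\text{clock}}$ (A7), the window and $\Fresh$ checks on $t_{\mathrm{rec}}$ reflect true recorder order.
\end{itemize}
\item \emph{$G_2$.} The argument is the same per effectful event, using $I_2$. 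The signed decision $g$ with $g.\mathrm{id}=e.\mathrm{gateway\_ref}$ is genuine, and scope containment is checked against the genuinely signed scope. Uniqueness of $g$ rests on A3.
\item \emph{$G_3$.} By A5, any persistent effect on a deny branch appears in $T^*$ as an effectful descendant of $d$, which $I_3$ forbids. The deny record $d$ itself cannot be suppressed or altered without a Recorder forgery or a hash-chain collision. This point links to $G_4$.
\end{itemize}

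\textbf{Guarantees $G_4$ and $G_5$.} We expect $G_4$ to be the main obstacle: it requires making ``unaltered record of what occurred'' precise. We define it as follows: every event in $T^*$ (i) carries a serialization that $O_{\mathrm{rec}}$ actually signed, and (ii) sits at the committed position of the honest commit sequence.
\begin{itemize}
\item For (i), any event in $T^*$ whose serialization was never queried to $O_{\mathrm{rec}}$ is output by $B_{\mathrm{rec}}$ as a forgery.
\item For (ii), if every event was signed but reordered, spliced, or omitted, then the $\mathrm{prev\_event\_hash}$ chain together with the Merkle seal checked by $I_4$ must bind two distinct preimages to one value, which $B_H$ outputs.
\end{itemize}
For $G_5$: $I_{5a}$ with $\neg F_{\text{hash}}$ fixes the exact replay-critical inputs in $R^*$. $\neg F_{\text{dep}}$ ensures these are all the inputs. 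A compliant interpreter is then deterministic on $\Lin(T^*)$, which is well defined by $\WF_{\text{resource}}$, so replay reproduces the terminal result and state delta. We would cite Thm.~\ref{thm:replay} for this step, taking care that the argument is not circular.

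\textbf{Bounding the terms.}
\begin{itemize}
\item $B_{\mathrm{gw}}$ embeds the challenge key and guesses which of the $q_{\mathrm{gw}}$ accepted Gateway decisions is the fresh one. This gives the factor $q_{\mathrm{gw}}$; scanning $T^*$ directly would avoid the loss, but the loose bound suffices.
\item $B_{\mathrm{rec}}$ is symmetric and gives the factor $q_{\mathrm{rec}}$.
\item $B_H$ simulates all oracles honestly, records all $q_H$ hash bindings, and on a winning transcript enumerates pairs. This yields at most $q_H^2\cdot\Adv^{\text{CR}}_H(B_H)$.
\item The deployment events contribute $\epsilon_{\text{tc}},\epsilon_{\text{dep}},\epsilon_{\text{clock}}$ by definition.
\end{itemize}
One subtlety must be checked: the reductions' simulations are perfect, so the adversary's view, and hence the probability of each failure event, is the same in the reduction as in the real game. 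Summing the terms gives the stated bound, and the final negligible form follows immediately from it.
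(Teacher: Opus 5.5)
Your proposal is correct and follows essentially the same route as the paper: a union bound over the six basic failure events, a per-guarantee case analysis showing that a PoE-valid output violating some $G_k$ forces one of them, and the three reductions plus the deployment parameters to bound the terms. Your guessing version of $B_{\mathrm{gw}}$ and the paper's direct scan of $T^*$ both fit under the stated loose factor. The only differences are in attribution: the paper also charges $G_4$ to $F_{\text{tc}}$, and charges $G_5$ divergence to $F_{\text{clock}}$ or $F_{\text{tc}}$ (commit-order inconsistency), which you should make explicit rather than leave inside the appeal to Theorem~\ref{thm:replay}, since determinism on $\Lin(T^*)$ alone reproduces the original execution only if $T^*$ is complete and its order reflects what actually happened.
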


\begin{proof}[Proof sketch]
Because $\mathcal{A}$ wins, $\mathsf{ValidatePoE}$ accepts while
$\neg G_k$ for some $k$. By the union bound,
$\Adv^{\text{poe}} \leq \sum_i \Pr[F_i]$. Case analysis maps each
semantic violation to at least one basic failure:
$G_1$ requires a non-oracle root allow ($F_{\text{gw\_sig}}$), a
hash mis-binding ($F_{\text{hash}}$), or an unrecorded effect
($F_{\text{tc}}$); $G_2$, analogously for scope decisions; $G_3$,
a recorded deny-with-effect contradicts $I_3$, so the adversary
must hide the effect from $T^*$ ($F_{\text{tc}}$) or mutate the
record ($F_{\text{rec\_sig}}$ or $F_{\text{hash}}$); $G_4$
requires trace rewriting without signature or hash break
($F_{\text{rec\_sig}}, F_{\text{hash}}$, or $F_{\text{tc}}$);
$G_5$ divergence while $I_{5a}$ holds requires a canonicalization
collision ($F_{\text{hash}}$), an undeclared dependency
($F_{\text{dep}}$), or a clock / commit-order inconsistency
($F_{\text{clock}}$ or $F_{\text{tc}}$). The three reductions
bound the cryptographic failures, and the deployment terms bound
the rest.
\end{proof}

\begin{interpretation}
Thm.~\ref{thm:safety} does not claim PoE is a purely
cryptographic protocol or that A3--A7 are enforced by signatures
and hashes. It claims that, in a deployment satisfying A3--A7,
every PoE-valid but semantically invalid execution must manifest
as a signature forgery, a hash collision, or a quantified
deployment-assumption failure. The substantive content of the
soundness argument lies in the choice of assumption set, the
partition into basic failure events, and the reduction targets
(notably Lemma~\ref{lem:tc}, which converts A5 from a standalone
assumption into a consequence of Effector-exclusive
credentialing); the closure from there is essentially a union
bound.
\end{interpretation}

\subsection{Deterministic replay}
\label{sec:replay}

\begin{theorem}[Replay]
\label{thm:replay}
Under $\WF(C, T, R)$, $\PoEop(C, T, R) = 1$, interpreter
determinism, and A6, $\Replay(C, T, R)$ executed over $\Lin(T)$
yields the same terminal result and persistent state delta as
the original execution.
\end{theorem}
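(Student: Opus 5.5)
The plan is an induction along the canonical linearization $\Lin(T) = (e_1,\ldots,e_n)$. Model the original execution as a sequence of transitions $\delta_{e_i}$, each acting on a state $s_{i-1}$ to produce $s_i$ and an output. The invariant to maintain is: after replaying the prefix $e_1,\ldots,e_i$, the replayed state $s'_i$ equals the original state $s_i$ restricted to every resource touched so far, and every recorded output, including $\mathrm{delta\_hash}$, agrees. The base case takes $s'_0 = s_0$. This holds because the memory snapshot hash and policy snapshot hash are pinned in $R$, and $I_{5a}$ ties each event's $\mathrm{envelope\_hash}$ to exactly these components.

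First, I will show that $\Lin(T)$ is a legitimate execution order, i.e.\ replaying in that order gives the same result as the original, possibly concurrent, history. $\WF_{\text{dag}}$ and $\WF_{\text{time}}$ together with $I_4$ give a linearization that respects $\prec$ and is consistent with $\mathrm{commit\_seq}$. The ordering of the original run can differ from $\Lin(T)$ only by swaps of causally concurrent events. By $\WF_{\text{resource}}$, any two effectful events on the same $\mathrm{resource\_id}$ are $\prec$-ordered, so concurrent effectful events touch disjoint resources. Non-effectful events do not write state. Hence adjacent concurrent events commute, and a Mazurkiewicz-trace argument shows that every linearization consistent with $\prec$ yields the same final state.

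Second comes the inductive step. For event $e_i$, the inputs to $\delta_{e_i}$ are the current state (by induction hypothesis) together with its dependency set. By A6 this dependency set is complete and declared. By $I_{5a}$ the declared components are planner version, tool schema indexed by capability, seeds and captured inputs, and they hash to $e_i.\mathrm{envelope\_hash}$. By A2 the components in $R$ are therefore exactly those the original transition consumed; otherwise we would exhibit a collision. Since the replay reads captured inputs rather than live sources, interpreter determinism gives $\delta_{e_i}(s'_{i-1}, R_{e_i}) = \delta_{e_i}(s_{i-1}, \text{deps}_{e_i})$. $I_4$ guarantees that the events and the bound payloads we replay are the recorded ones. $I_3$ guarantees that denied branches contribute no delta in either run, so they can be replayed as no-ops. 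Composing over $i = n$ gives equality of the terminal result and of the persistent state delta.

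The main obstacle is the first step: converting the syntactic conditions into a genuine commutation lemma. Resource serialization only covers effectful events, so I must argue that non-effectful events cannot read state written by a concurrent effectful event without declaring it. My first attempt will treat any such read as a dependency that A6 forces into the envelope, and hence into a captured input. That makes replay independent of interleaving, because the event consumes the captured value rather than live state. A secondary subtlety is that the statement is deterministic while A2 is computational. I will phrase the collision step as ``except with negligible probability,'' or simply assume the envelope binding is injective on the relevant inputs, and note this explicitly.
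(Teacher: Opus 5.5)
Your proposal is correct and is essentially the paper's argument. Both induct along $\Lin(T)$ with the base case fixed by $R$, and both get the inductive step from $I_{5a}$ plus A6 pinning each transition's inputs together with interpreter determinism, while $I_4$, $\WF_{\text{dag}}$, and $\WF_{\text{resource}}$ handle ordering and concurrent mutations. Your explicit commutation lemma and your collision-resistance caveat (which the paper's own soundness proof concedes, listing $F_{\text{hash}}$ as a way $G_5$ can fail while $I_{5a}$ holds) are refinements the sketch leaves implicit; one small fix is that $I_{5a}$'s envelope does not include the memory snapshot hash, so your base case should rest on $R$ directly, as the paper's does.
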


\begin{proofsketch}
Induct on $\Lin(T)$. The base case fixes initial authority and
environment via $C$, $\CR$, $\RL$, and $R$. At the inductive
step, $I_{5a}$ ensures envelope consistency with $R$; A6 ensures
no hidden input enters $\delta$; interpreter determinism ensures
identical inputs produce the same next state. $I_4$ and
$\WF_{\text{dag}}$ rule out valid reordering, deletion, or
fabrication; $\WF_{\text{resource}}$ serializes concurrent
mutations.
\qedhere
\end{proofsketch}

PoE distinguishes three dependency classes: pure deterministic
dependencies, replayable directly; snapshot-able dependencies,
replayable against captured artifacts; and uncaptured
non-deterministic dependencies, outside the replay envelope. The
strength of $G_5$ is inversely proportional to the prevalence of
the third class. Replay fidelity is \emph{internal} consistency
under the recorded context; it is not a claim that the external
world at replay time resembles the world at execution time.

\section{Prime Execution Model and Attestation}
\label{sec:pem}

PoE requires structural separation between \emph{who plans},
\emph{who authorizes}, \emph{who mutates}, and \emph{who records}.
Without such separation, the planner silently becomes the executor
and the trace self-attests. This is a classical
capability-security concern~\cite{saltzer1975, miller2006robust},
instantiated for trajectory-level verification. The Prime Execution
Model (PEM) organizes these responsibilities into six planes
(Table~\ref{tab:planes}). The Enforcement plane instantiates
PEP/PDP discipline, with the signed decision record becoming the
canonical authorization anchor of the trajectory.

\begin{table}[t]
\centering\small
\renewcommand{\arraystretch}{1.1}
\begin{tabular}{@{}l p{2.15in}@{}}
\toprule
\textbf{Plane} & \textbf{Components \& responsibility} \\
\midrule
Governance & Policy Manager, Constraint Registry ---
maintain policy rules and architectural invariants. \\
Planning & Planner --- proposes candidate actions under $C$.
Untrusted with respect to authorization. \\
Enforcement & Gateway / PEP+PDP --- authoritative choke point;
produces the canonical allow/deny record. \\
Effect & Effector --- sole component permitted to produce durable
mutation, and only within the authorized scope. \\
Record & Trace Recorder, Context Store, Contract Registry,
Revocation Log --- append-only, tamper-evident. \\
Observation & Non-authoritative: replay, monitoring, analytics,
validation. Cannot retroactively authorize or rewrite. \\
\bottomrule
\end{tabular}
\caption{The six planes of the Prime Execution Model.}
\label{tab:planes}
\end{table}

\paragraph{Golden Rule.}
\emph{All effectful execution passes through exactly one
authoritative Gateway evaluation.} Path compliance ($G_2$) and
deny semantics ($G_3$) derive from this single architectural
rule, which echoes the Reference Monitor
discipline~\cite{saltzer1975} and is the operational expression of
$I_2$. "Exactly one" refers to one canonical \emph{logical}
decision record, not one physical process.

\subsection{Trace completeness from Effector-exclusive credentialing}

A5 is not a free assumption; PEM supports it.

\begin{lemma}[Trace completeness]
\label{lem:tc}
Assume (1) every persistent mutation inside the declared PoE
effect boundary can be committed only with Effector credentials;
(2) before committing any persistent mutation the Effector must
synchronously emit an event to the Recorder and obtain a sealed
acknowledgment; (3) non-Effector credentials cannot commit
persistent mutations inside the declared boundary; (4) the
Recorder signing root is uncompromised. Then every in-bound
mutation $m$ has a corresponding $e$ with
$\Seal(e) < \Commit(m)$.
\end{lemma}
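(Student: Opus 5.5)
The argument fixes an arbitrary in-bound persistent mutation $m$ and traces which principal could have committed it, then follows that principal through the mandatory emit-and-acknowledge step. It is a direct case argument on the committing credential, followed by a temporal-ordering argument on the Effector's protocol, with hypothesis (4) used to make the acknowledgment meaningful.

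\emph{Step 1 (attribution).} Let $m$ be any persistent mutation inside the declared PoE effect boundary, and let $\kappa$ be the credential under which $\Commit(m)$ occurred. By (3), $\kappa$ is not a non-Effector credential. By (1), $m$ could only have been committed with Effector credentials, so $\kappa$ is an Effector credential. We treat the Effector as the component that exclusively exercises those credentials; this is the reading of ``Effector-exclusive'' in PEM (Table~\ref{tab:planes}). Hence $m$ was committed by an Effector action.

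\emph{Step 2 (ordering).} By (2), before that commit the Effector synchronously emitted some event $e$ describing $m$ and obtained a sealed acknowledgment. Synchronicity means $\Commit(m)$ is issued only after the acknowledgment returns. The acknowledgment is issued only after sealing. This gives the chain $\Seal(e) < \mathrm{ack}(e) < \Commit(m)$. Here $\Observe(e,m)$ holds because $e$ is the event the Effector emitted for $m$, carrying $\mathrm{resource\_id}$ and $\mathrm{delta\_hash}$.

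\emph{Step 3 (authenticity of the seal).} We must exclude the case where the acknowledgment is fake and no $e$ actually entered $T$. By (4) the Recorder signing root is uncompromised, so a valid sealed acknowledgment can only be produced by the Recorder after it appends and seals $e$. Under A1 this holds except with the EUF-CMA forgery probability, which Theorem~\ref{thm:safety} already charges to the $q_{\mathrm{rec}}$ term. Therefore $e \in T$ with $\Seal(e) < \Commit(m)$, which is exactly the conclusion of A5.

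\emph{Main obstacle.} The hardest step is Step~1's implicit premise that possession of Effector credentials implies the commit went through the Effector's emit-then-commit code path. If the credentials leak, hypotheses (1) and (3) still hold literally, but (2) is bypassed. I would handle this by stating that the lemma treats the credential holder and the Effector protocol as one unit. Credential exfiltration then becomes a violation of hypothesis (1)/(3) as intended, and so contributes to $\epsilon_{\text{tc}}$ rather than counting against the lemma. A secondary subtlety is defining $\Observe(e,m)$ precisely, meaning the emitted event must faithfully describe $m$. I would take this as part of what ``emit an event'' in (2) means.
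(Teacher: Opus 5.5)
Your proposal is correct and follows essentially the same route as the paper's sketch: (1) and (3) force every in-bound mutation through the Effector, (2) yields a sealed acknowledgment before $\Commit(m)$, and (4) rules out a forged acknowledgment. Your treatment of credential leakage as a violation of (1)/(3) charged to $\epsilon_{\text{tc}}$ matches the paper's remark after the lemma, and your appeal to A1 in Step~3 just spells out the paper's ``by (4), the acknowledgment cannot be forged'' more explicitly.
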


\begin{proofsketch}
By (1) and (3), every in-bound mutation passes through the
Effector; by (2), it obtains a sealed acknowledgment before
commit; by (4), the acknowledgment cannot be forged. \qedhere
\end{proofsketch}

A credential leak (sealed service account, hardware-attested key)
violates (1) or (3); this case enters the bound of
Thm.~\ref{thm:safety} through T6 and $\epsilon_{\text{tc}}$.

\subsection{Attestation and EAC}
\label{sec:attestation}

The formal artifact of PoE is the \emph{Execution Attestation
Certificate}:
\begin{multline*}
\mathrm{EAC} = \Sign_{\Kgw}(\mathrm{id}(C), \mathrm{root}(T),\\
\mathrm{hash}(R), \mathrm{key\_id}, \mathsf{valid}, \mathrm{timestamp}),
\end{multline*}
issued only when $\PoEop(C, T, R) = 1$, no invariant failed,
replay succeeds when policy requires it, and the attestation
decision is published to the sealed trace before the EAC is
released.

\paragraph{Lifecycle.}
\label{sec:eac-lifecycle}
Gateway keys may rotate; a tamper-evident Key Registry $\KR :
\mathrm{key\_id} \mapsto (pk, \mathrm{validity\_window},
\mathrm{status})$ lets verifiers check historical EACs. If an EAC
should no longer authorize downstream work, a deployment marks it
in an EAC revocation log:
$\RevokeEAC(\mathrm{eac\_id}, \mathrm{reason}, t_{\mathrm{rec}})$.
Historical PoE validity is not rewritten; only current EAC
acceptability changes. Privacy-preserving variants --- selective
disclosure or zero-knowledge EACs proving $\PoEop = 1$ without
revealing the full trace --- are open work.

\paragraph{Proof-driven scheduling.}
The scheduler routes on validated outcomes:
$\mathtt{execution\_error} \to$ retry / escalate;
$\mathtt{contract\_completed} \to$ next workflow node;
$\mathtt{pattern\_detected} \to$ review;
$\mathtt{attestation\_issued} \to$ archive or release downstream
work. The loop in Fig.~\ref{fig:loop} closes when proof validity
authorizes the next step; it does not authorize the next step
merely on successful completion.

\section{Related Work}
\label{sec:related}

\paragraph{Runtime verification and enforcement.}
PoE is closest in spirit to runtime
verification~\cite{leucker2009runtime}. Many of
$I_1,\ldots,I_{5a}$ could be expressed in LTL or monitor-automata
formalisms. PoE differs by binding authorization, effect, sealed
history, replay context, and attestation into one runtime object
--- RV-style checking embedded in a closed loop for proof,
replay, and attestation.

\paragraph{Tamper-evident logs and provenance.}
PoE's history guarantees descend from tamper-evident
logging~\cite{crosby2009secure}, certificate
transparency~\cite{laurie2014ct}, and
provenance~\cite{w3cprov}. PoE additionally binds authorization
and replay into the same object, so the record reconstructs the
execution.

\paragraph{Supply-chain attestations.}
in-toto~\cite{torres2019intoto} and SLSA~\cite{slsa} attest how a
build artifact is produced; PoE targets live agent execution that
is online, effectful, and non-deterministic by default. EACs can
be anchored in SLSA-style provenance for the agent binary and its
skills.

\paragraph{TEEs, consensus, verifiable computation, zkVMs.}
TEEs~\cite{costan2016sgx} attest to enclave integrity;
consensus~\cite{castro1999pbft} agrees on shared state;
verifiable computation~\cite{gennaro2010vc} and
zkVMs~\cite{risczero} prove a function executed per VM semantics.
PoE proves a different object: that a \emph{governed execution}
stayed within contract scope, traversed the authorized path,
produced recorded effects, and is replayable under its recorded
context. PoE composes with zkVMs: a zkVM may prove internal
functional correctness while PoE proves authorization, path, and
effect boundaries.

\paragraph{Capability systems and agent security.}
PEM's authority separation lifts classical capability
discipline~\cite{saltzer1975, miller2006robust} to the
execution-trajectory level. Indirect prompt
injection~\cite{greshake2023injection} establishes that planners
are corruptible; PoE addresses the complementary question of
whether a corrupted planner can produce an attestable execution
outside the authorized boundary. Upstream agent-safety work
provides risk inputs; PoE provides the runtime governance
substrate.

\section{Empirical Evaluation}
\label{sec:eval}

We report preliminary single-node measurements of a PoE prototype
instantiating Gateway, Trace Recorder, Contract Registry,
Revocation Log, Replay Context Store, $\mathsf{ValidatePoE}$, and
EAC issuance. The prototype is implemented in TypeScript
(Node.js~20). Experiments run on Ubuntu~22.04 LTS with Intel Core
i7-12700K (12 cores, 3.6\,GHz), 32\,GB DDR5, and NVMe SSD. Each
measurement is the mean of 1{,}000 trials; reported uncertainties
are 95\% confidence intervals. The baseline $B_0$ is direct tool
invocation without PoE instrumentation.

\paragraph{Overhead.}
Table~\ref{tab:overhead} reports end-to-end latency for three
workloads: a single-capability effectful flow, a five-node
pipeline, and a 50-way parallel batch. PoE adds a roughly
constant $\approx 2.7$\,ms per execution. Gateway decision
(1.1\,ms) and Recorder seal with Merkle hash (0.8\,ms) are the
two dominant components; the remaining $\approx 0.8$\,ms covers
event construction, canonical serialization, and envelope-hash
computation. Gateway decision cost depends on policy complexity:
the reported 1.1\,ms corresponds to a capability-predicate policy
of modest complexity in our prototype and should be expected to
scale with rule count and evaluator implementation. Relative
overhead falls from 65.9\% on a minimal single flow to 4.4\% on
concurrent batch workloads as Gateway and Recorder costs
amortize.

\begin{table}[t]
\centering\footnotesize
\renewcommand{\arraystretch}{1.15}
\setlength{\tabcolsep}{4pt}
\begin{tabular}{@{}l r r r@{}}
\toprule
\textbf{Workload} & \textbf{$B_0$ (ms)} & \textbf{PoE (ms)} &
\textbf{$\Delta$ (\%)} \\
\midrule
single capability & $4.1{\pm}0.3$ & $6.8{\pm}0.4$ & $+65.9$ \\
5-node pipeline & $18.4{\pm}1.1$ & $20.9{\pm}1.2$ & $+13.6$ \\
50-way batch & $61.2{\pm}3.8$ & $63.9{\pm}3.9$ & $+4.4$ \\
\midrule
Gateway decision & --- & $1.1{\pm}0.1$ & --- \\
Recorder + Merkle & --- & $0.8{\pm}0.1$ & --- \\
\bottomrule
\end{tabular}
\caption{End-to-end latency overhead. $n{=}1000$ trials per
cell, 95\% CI.}
\label{tab:overhead}
\end{table}

\paragraph{Trace storage.}
Table~\ref{tab:storage} reports compressed ECES event-stream size
per execution. A standard eight-event trace compresses to
$\approx 1.1$\,KB (gzip at default settings; alternative
compressors and tuning levels will yield different ratios); at
10{,}000 executions/day with a 30-day retention window this is
$\approx 330$\,MB of event-stream storage. Captured inputs held
in $R$ (\S\ref{sec:assumptions})\,---\,tool response bodies,
retrieved document snippets, model-facing context\,---\,are
accounted separately and are workload-dependent; a long-context
LLM agent may add one to two orders of magnitude on top of the
event-stream figure. Hot-path indexed metadata is 0.05\,KB per
execution; full traces are retrieved on demand for replay or
audit.

\begin{table}[t]
\centering\footnotesize
\renewcommand{\arraystretch}{1.15}
\setlength{\tabcolsep}{4pt}
\begin{tabular}{@{}l r r r@{}}
\toprule
\textbf{Profile} & \textbf{Events} & \textbf{Raw (KB)} &
\textbf{Comp. (KB)} \\
\midrule
minimal & 3 & 1.2 & 0.4 \\
standard & 8 & 3.1 & 1.1 \\
complex & 25 & 9.4 & 3.2 \\
indexed metadata only & --- & 0.15 & 0.05 \\
\bottomrule
\end{tabular}
\caption{Compressed ECES event-stream storage per execution
(excluding captured inputs held in $R$).}
\label{tab:storage}
\end{table}

\paragraph{Detection.}
We injected adversarial workloads corresponding to T2 (direct
Effector writes that skip $\mathrm{gateway\_ref}$) and T4
(payload replacement, event reordering, event deletion) across
10{,}000 trials each. $\mathsf{ValidatePoE}$ rejected every
injected trace. No unmodified trace in the control set was
rejected. Rejection is by construction: T4 attacks break
$\mathrm{prev\_event\_hash}$ linkage (violating $I_4$); T2
attacks lack a valid Gateway decision event (violating $I_2$).
These numbers confirm the mechanism, not a security property:
adversaries that successfully forge a Gateway signature, forge a
Recorder seal, or break hash-linking are out of scope of the
prototype evaluation and are addressed only by the bound of
Thm.~\ref{thm:safety}.

\paragraph{Scope of these results.}
The prototype is single-node and implements the invariants and
mechanisms described in this paper. These measurements are
intended to validate mechanism feasibility on commodity hardware,
not to characterize production deployment performance. We
deliberately do not report a comparison against distributed-tracing
frameworks augmented with post-hoc invariant checkers: such a
comparison requires multi-node deployment and workload calibration
against existing telemetry pipelines, and is deferred to a
companion paper along with multi-node measurements, realistic
T1--T6 injection strength, captured-input overhead under
long-context LLM workloads, and replay fidelity over a 30-day
archival window.

\section{Limitations}
\label{sec:limitations}

\paragraph{Valid under contract $\neq$ safe or well-chosen contract.}
$\PoEop = 1$ attests that execution stayed within $C$; it is
silent on whether $C$ was well chosen, and on whether an
authorized action is wise. A contract that authorizes harmful
actions yields a PoE-valid harmful execution. Contract authoring,
review, and revocation are upstream problems.

\paragraph{Replayable $\neq$ reproducible in the world.}
Thm.~\ref{thm:replay} guarantees replay under captured inputs,
declared dependencies, and interpreter determinism. It does not
guarantee that external data sources, tools, or model versions at
replay time match those at execution time.

\paragraph{Single-Gateway scope; composition open.}
Thm.~\ref{thm:safety} is explicit about single-logical-Gateway
deployments. Threshold Gateways require a quorum $I_2$, quorum
certificates, view-change / equivocation handling, and a BFT
safety proof. Concurrent contracts sharing a resource require
cross-contract serialization semantics beyond this paper.

\paragraph{Planner compromise is out of scope.}
PoE confines what a compromised planner can do within an issued
contract; it does not prevent compromise. Upstream defenses ---
prompt-injection hardening, interpretability, and
contract-origination controls --- are orthogonal.

\section{Conclusion}

PoE proposes a primitive:
$\text{Execution} \to \text{Proof} \to \text{Attestation} \to
\text{Scheduling}$. Its contribution is not faster computation or
alternative consensus. It is a reframing of what constitutes
\emph{work}. Legacy systems measure work by computation consumed
or state agreed upon. PoE treats it as \emph{attestable
execution under contract}. The formal results separate two
classes of failure: cryptographic (signature unforgeability,
collision resistance) and deployment (trace completeness,
dependency declaration, clock monotonicity). Making both
explicit is what moves PoE from architectural claim to auditable
systems-security claim. Preliminary single-node measurements
(\S\ref{sec:eval}) show $\approx 2.7$\,ms Gateway-plus-sealing
overhead amortizing to 4.4\% on concurrent batches, and
compressed traces around 1\,KB per standard execution ---
indicating the mechanism is practical on commodity hardware at
enterprise workload scales. Next: extend the model to threshold
Gateways and concurrent contract composition, and report
multi-node prototype measurements with realistic captured-input
overheads under long-context LLM workloads in a companion paper.

\section*{Acknowledgments}
We thank the AlphaBitCore engineering team for discussions that
shaped the architectural model. Remaining errors are our own.

\appendix

\section{Witness Constructions for Theorem~\ref{thm:independence}}
\label{app:witnesses}

Each $S_k$ satisfies every invariant except $I_k$ (or, for $S_5$,
satisfies $I_{5a}$ but violates $G_5$ via A6 failure).

\begin{center}\small
\renewcommand{\arraystretch}{1.2}
\begin{tabular}{@{}l p{1.4in} l l@{}}
\toprule
\textbf{$S_k$} & \textbf{Construction} & \textbf{Violates $I$} &
\textbf{Violates $G$} \\
\midrule
$S_1$ & No root allow event $e_a$; descendants locally sealed. & $I_1$ & $G_1$ \\
$S_2$ & Side-channel effector writes with $\mathrm{gateway\_ref}$ pointing to a non-canonical decision. & $I_2$ & $G_2$ \\
$S_3$ & Gateway emits deny $d$; effector writes a durable cache entry on the deny branch. & $I_3$ & $G_3$ \\
$S_4$ & Recorder supports post-hoc event replacement with regenerated local seal. & $I_4$ & $G_4$ \\
$S_5$ & $\delta_e$ reads undeclared hidden input $z$; envelope check passes on declared fields. & ($I_{5a}$ passes) & $G_5$ (A6) \\
\bottomrule
\end{tabular}
\end{center}

\noindent $S_5$ demonstrates why A6 is a named assumption rather
than an invariant: the validator cannot detect undeclared
dependencies from the trace alone, so dependency-declaration
completeness enters Thm.~\ref{thm:safety} as
$\epsilon_{\text{dep}}$.

\bibliographystyle{plainnat}
\bibliography{refs_v11}

\end{document}